\documentclass{article} 
\usepackage{iclr2026_conference,times}

\usepackage{amsmath,amsfonts,bm}

\def\eqref#1{equation~\ref{#1}}

\def\1{\bm{1}}

\DeclareMathAlphabet{\mathsfit}{\encodingdefault}{\sfdefault}{m}{sl}
\SetMathAlphabet{\mathsfit}{bold}{\encodingdefault}{\sfdefault}{bx}{n}

\newcommand{\R}{\mathbb{R}}

\usepackage{hyperref}
\usepackage{url}
\usepackage{algorithm}      
\usepackage{algpseudocode}
\usepackage{cleveref}
\crefname{algorithm}{Alg.}{Algs.}
\usepackage{graphicx}
\usepackage{booktabs}
\usepackage[table]{xcolor}
\usepackage{wrapfig}

\newcommand{\diam}{\mathrm{diam}}
\newcommand{\Dgm}{\mathrm{Dgm}}
\newcommand*{\norm}[1]{\left\|#1\right\|}
\newcommand{\pers}{\operatorname{pers}}

\usepackage{amsthm}
\usepackage{thmtools}
\theoremstyle{plain}
\newtheorem{theorem}{Theorem}[section]

\newtheorem{lemma}[theorem]{Lemma}
\newtheorem{corollary}[theorem]{Corollary}
\theoremstyle{definition}
\newtheorem{definition}[theorem]{Definition}

\theoremstyle{remark}

\title{Towards Scalable Persistence-Based Topological Optimization}

\author{Abderrahim Bendahi \thanks{Equal contribution.} \\
\'Ecole Polytechnique\\
Paris, France\\
\texttt{abderrahim.bendahi@polytechnique.edu} \\
\And
Alexandre Duplessis \footnotemark[\value{footnote}] \\
ENS Ulm, PSL \\
Paris, France \\
\texttt{alexandre.duplessis@ens.psl.eu} \\
\AND
Arnaud Fickinger \\
UC Berkeley \\
Berkeley, CA, USA \\
\texttt{arnaud.fickinger@berkeley.edu}
}

\usepackage{xcolor}
\newcommand{\cmt}[1]{\textcolor{blue}{\footnotesize\ttfamily /* #1 */}}
\iclrfinalcopy 
\maintrack

\begin{document}

\maketitle

\begin{abstract}
Persistence-based topological optimization deforms a point cloud $X\subset\R^d$ by minimizing objectives of the form $L(X)=\ell(\Dgm(X))$, where $\Dgm(X)$ is a persistence diagram. In practice, optimization is limited by two coupled issues: persistent homology is typically computed on subsamples, and the resulting topological gradients are highly sparse, with only a few \emph{anchor} points receiving nonzero updates. Motivated by diffeomorphic interpolation, which extends sparse gradients to smooth ambient vector fields via Reproducing Kernel Hilbert Space (RKHS) interpolation, we propose a more scalable pipeline that improves both subsampling and gradient extension. We introduce subsampling via random slicing, a lightweight scheme that promotes iteration-wise geometric coverage and mitigates density bias. We further replace the costly kernel solve with a fast \emph{Nadaraya-Watson} (NW) Gaussian convolution, producing a globally defined smooth update field at a fraction of the computational cost, while being more suited for topological optimization tasks. We provide theoretical guarantees for NW smoothing, including anchor approximation bounds and global Lipschitz estimates.
Experiments in $2$D and $3$D show that combining random slicing with NW smoothing yields consistent speedups and improved objective values over other baselines on common persistence losses.
\end{abstract}

\section{Introduction}
Persistent homology (PH) provides a multiscale summary of the topology of a dataset by tracking the birth and death of homological features along a filtration, producing a persistence diagram $\Dgm(X)$, and enjoys well-quantified stability properties in bottleneck and Wasserstein distances as established in \citet{convergence-rates, Skraba2020WassersteinSF}. This representation has proved expressive and robust enough to serve as a building block for learning pipelines, where topological summaries enter either as features or as loss terms. A particularly direct use is \emph{persistence-based topological optimization}, where one updates the data itself by minimizing
\begin{equation}
    \label{eq:topo-obj}
    \min_{X = \{ x_i\}_{i=1}^n \subset \R^d}  L(X) := \ell(\Dgm(X)),
\end{equation}
with losses $\ell$ encoding tasks such as simplification (remove features), augmentation (create features), or registration to a target diagram, in the spirit of the topological loss design used in \citet{toposeg}.

Despite favorable stability and almost-everywhere differentiability properties, the practical performance of \eqref{eq:topo-obj} is often limited by \emph{gradient sparsity}: at a given iterate, only a small set of points participating in critical birth and death simplices receive nonzero gradients, while most points have zero update, a phenomenon documented for point-cloud optimization in \cite{diffeomorphic}.

A recent remedy is \emph{diffeomorphic interpolation}, introduced in \cite{diffeomorphic}, which constructs a smooth ambient vector field matching the sparse topological gradient on a set of anchor points and then transports the full point set along the induced flow. This approach turns sparse gradients into global deformations with controlled Lipschitz constants and combines well with subsampling, since the diffeomorphism learned on a subsample can be applied to the entire point cloud. However, in its RKHS formulation, diffeomorphic interpolation requires solving a kernel linear system at each iteration, which can dominate runtime in large-scale settings, and when combined with uniform subsampling---a common choice for scalability---it inherits the high-variance behavior and density bias of the anchors,

\paragraph{Our Contributions.}
\begin{itemize}
    \item We propose random slicing to improve subsampling geometric coverage.
    \item We replace the kernel solve by NW Gaussian smoothing to smooth the gradient.
    \item We provide theoretical guarantees and validate speed and quality gains on $2$D and $3$D tasks.
\end{itemize}

\section{Related Works}
\paragraph{Persistence diagrams: algorithms and stability.}
The practical use of persistence diagrams relies on their robustness: small perturbations of the input induce small changes in the diagram, formalized in \cite{cohensteiner2007stability}. On the algorithmic side, \cite{zomorodian2005computing, Edelsbrunner2002} established the algebraic and computational foundations that underlie most modern PH pipelines.

\paragraph{Scaling persistent homology.}
Exact Vietoris--Rips PH becomes prohibitive at scale; efficient implementations such as \cite{bauer2021ripser} reduce memory and time constants dramatically, while approximation schemes such as \cite{sheehy2013linear} justify sparsifying the filtration itself. These developments motivate subsampling and approximation inside optimization loops, but they also highlight that \emph{how} we subsample can materially affect the optimization trajectory.

\paragraph{Topological losses in machine learning (ML).}
Topological losses have been deployed broadly in ML, for instance as regularizers that enforce desired topological structure in predictions \citep{pmlr-v89-chen19g,clough2020topoloss}, as constraints for geometry-aware representation learning such as \emph{Topological Autoencoders}~\cite{moor2020topoae}, and as objectives in shape and point-cloud optimization \citep{carriere2021optimizing}. Beyond regularization, persistence-based losses have been used for segmentation in medical imaging \citep{clough2020topoloss} and for uncertainty and shift detection via activation-graph topology \citep{lacombe2021}. In contrast, we focus on directly optimizing point clouds under persistence-based losses, where gradients are typically sparse.


\paragraph{From sparse topological gradients to smooth ambient updates.}
A recent approach to address sparsity is \emph{Diffeomorphic interpolation} \citep{carriere2024diffeomorphic}, which constructs a minimum-norm RKHS vector field matching the sparse anchor gradients and updates points through the induced diffeomorphic flow. Our work follows the same high-level principle (extend sparse gradients to ambient smooth vector fields) but targets scalability: we replace exact kernel interpolation (requiring repeated kernel linear solves) with a fast normalized Gaussian convolution, i.e. a Nadaraya-Watson type estimator originating in \emph{On Estimating Regression} \citep{nadaraya1964regression} and \emph{Smooth Regression Analysis}~\citep{watson1964smooth}.

\section{Background}
\subsection{Point clouds, filtrations, and persistence diagrams}
Let $X = \{ x_i \}_{i = 1}^n \subset \R^d$ be a point cloud. We consider the Vietoris--Rips filtration
$\{ K_t(X) \}_{t \ge 0}$ defined by
\[ K_t(X) := \Big\{\sigma \subseteq [n] : \norm{x_p - x_q}_2 \le t, \quad \forall p, q \in \sigma \Big\}. \]
For a fixed homological dimension $k\ge 0$, persistent homology yields a multiset of pairs
    \[ \Dgm_k(X)= \{(b_\alpha, d_\alpha) \}_{\alpha \in \mathcal{I}_k}, \qquad 0 \le b_\alpha \le d_\alpha \le + \infty, \]
where $(b_\alpha, d_\alpha)$ records the birth and death scales of a $k$-dimensional topological feature.
We denote the persistence (distance to the diagonal) by $\pers_\alpha := d_\alpha - b_\alpha.$

\subsection{Persistence-based objectives}
We study objectives that factor through persistence diagrams:
\begin{equation} \label{eq:objective}
  \min_{X \subset \R^d}  L(X) \quad \text{where} \quad L(X) := \ell(\Dgm_k(X)),
\end{equation}
for some loss $\ell$ defined on persistence diagrams.
Common examples include:
    \[ \textit{(simplification)} \quad \ell(\Dgm) = \sum_{\alpha \in \tilde{\mathcal{I}}} \pers_\alpha^2, \qquad \textit{(augmentation)} \quad \ell(\Dgm) = -\sum_{\alpha \in \tilde{\mathcal{I}}} \pers_\alpha^2, \]
and \emph{diagram registration} losses $\ell(\Dgm) = W_p(\Dgm, \Dgm_{\mathrm{target}})$. In practice, $\tilde{\mathcal{I}}$ can select the $k$ most persistent points or a task-specific subset.

\subsection{Sparsity of topological gradients}
The map $X \mapsto \Dgm_k(X)$ is stable \citep{cohensteiner2007stability, stability2, Skraba2020WassersteinSF}, and for standard choices of $\ell$, the objective $L$ is Lipschitz and thus
differentiable almost everywhere (\emph{Rademacher theorem}, \citet{MORGAN201625}). At points of differentiability, the chain rule takes the schematic form
\begin{equation}
\label{eq:chainrule}
  \frac{\partial L}{\partial x_i}(X) = \sum_{ \alpha : x_i \leadsto (b_\alpha, d_\alpha)} \left( \frac{\partial \ell}{\partial b_\alpha} \frac{\partial b_\alpha}{\partial x_i} + \frac{\partial \ell}{\partial d_\alpha} \frac{\partial d_\alpha}{\partial x_i} \right),
\end{equation}
where the sum runs over diagram points whose birth and death critical simplices involve $x_i$. For most indices $i \in [n]$, no participation occurs in \eqref{eq:chainrule}, hence $\nabla L(X)_i=0$.
We call \emph{anchors} the indices with nonzero gradient:
    \[ I(X) := \{ i \in [n] : \nabla L(X)_i \neq 0 \}.\]
Gradient sparsity slows optimization and amplifies variance when PH is computed on subsamples.

\subsection{Diffeomorphic interpolation as ambient smoothing}
A principled way to mitigate sparsity is to extend anchor gradients to a smooth ambient vector field $v : \R^d \to \R^d$ and update all points using this field. In particular, diffeomorphic interpolation constructs the minimum-norm RKHS vector field satisfying the interpolation constraints
\begin{equation} \label{eq:rkhs-constraint}
  v(x_i) = g_i \quad \text{for all } i \in I, \qquad \text{where} \quad g_i := \nabla L(X)_i,
\end{equation}
by solving
\begin{equation}
\label{eq:rkhs-minnorm}
  v^\star \in \mathop{\arg \min}_{v \in \mathcal{H}} \norm{v}_{\mathcal{H}} \quad  \quad \text{ under the constraint } \ref{eq:rkhs-constraint}.
\end{equation}
With a Gaussian kernel $K(x, y) = \exp(- \|x-y\|^2 / (2 \sigma^2)) I_d$, the \emph{representer theorem} yields
    \[ v^\star(x) = \sum_{i \in I} K(x,x_i) a_i, \qquad
  \text{with coefficients } a = (a_i)_{i \in I} \text{ solving } \mathbb{K} a = g, \]
where $\mathbb{K} = (K(x_i, x_j))_{i, j \in I}$. This produces a smooth field that matches anchors exactly, but requires solving a kernel system (typically cubic in $|I|$) at every iteration.

\section{Towards Scalable Diffeomorphic Topological Optimization}
\label{sec:method}

\subsection{Subsampling via Random Slicing}\label{subsec:proj_strat}
\paragraph{Motivation and First Guarantees.} Computing persistent homology on the full cloud is often prohibitive, so we rely on subsampling. However, in an optimization loop, the subsampling distribution directly affects (i) the variance of the induced anchor gradients and (ii) the geometric regions that can influence topological events. Uniform sampling (used in \cite{diffeomorphic}) is cheap but tends to oversample dense regions and miss sparse boundary areas (\Cref{fig:slicing_two_blobs}) where long-range simplices often form. We propose a simple structured sampler that enforces global coverage \emph{without} requiring clustering, farthest-point sampling, or additional geometry preprocessing.

Our design is inspired by \emph{sliced} methods in optimal transport~\citep{sliced, Bonneel2015}, which approximate high-dimensional geometry by repeatedly probing random one-dimensional projections where computations reduce to sorting in $\R$. We adopt the same principle for subsampling: each iteration uses one random \emph{slice} to construct a globally spread subset.

More formally, let $X=\{x_i\}_{i=1}^n\subset\R^d$ and fix a target subsample size $s\in\{2,\dots,n\}$.
Draw a random direction $u\sim\mathcal{N}(0,I_d)$ and normalize $u\leftarrow u/\|u\|_2$.
Compute the 1D projections
\[
t_i := \langle x_i,u\rangle,\qquad i\in[n],
\]
and let $\pi$ be a permutation that sorts these values increasingly:
$t_{\pi(0)}\le t_{\pi(1)}\le \cdots \le t_{\pi(n-1)}$.
We then select indices at evenly spaced ranks:
\begin{equation}
\label{eq:proj_strat_def}
S(u)
:=
\Bigl\{\pi(r_k)\;:\; r_k=\big\lfloor k(n-1)/(s-1)\big\rfloor,\;\; k=0,\dots,s-1\Bigr\}.
\end{equation}
The resulting subset $S(u)$ is used for PH computation at that iteration.

Intuitively, in one dimension, selecting evenly spaced ranks enforces a strong notion of coverage.
Thus, for a fixed slice, the sampler avoids the density bias of uniform sampling: extremely dense regions do not \emph{steal} all sample budget, since each interval of the sorted list receives roughly the same number of points.

This notion of coverage can be formalized by the next lemma.

\begin{lemma}[Maximal rank gap]
\label{lem:rank_gap_short}
Let $r_k=\lfloor k(n-1)/(s-1)\rfloor$ be the selected ranks.
    \[ \max_{k\in\{0,\dots,s-2\}} (r_{k+1}-r_k) \le 
\left\lceil \frac{n-1}{s-1}\right\rceil. \]
\end{lemma}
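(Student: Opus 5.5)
Set $q := (n-1)/(s-1)$, which is a positive real number since $2\le s\le n$, so that $r_k = \lfloor kq\rfloor$. The plan is a direct floor/ceiling estimate of a single consecutive difference $r_{k+1}-r_k = \lfloor (k+1)q\rfloor - \lfloor kq\rfloor$, with no reference to the projections or to the permutation $\pi$. The claim is purely arithmetic about the ranks, and the bound obtained will be uniform in $k$, so it transfers directly to the maximum over $k\in\{0,\dots,s-2\}$.

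The key step is the standard inequality $\lfloor a+b\rfloor \le \lfloor a\rfloor + \lceil b\rceil$ for real $a,b$. Write $a=\lfloor a\rfloor + \{a\}$ with $\{a\}\in[0,1)$. Then
\[
a+b < \lfloor a\rfloor + 1 + b \le \lfloor a\rfloor + 1 + \lceil b\rceil .
\]
The right-hand side is an integer, so the floor of the left-hand side is at most $\lfloor a\rfloor + \lceil b\rceil$. Applying this with $a=kq$ and $b=q$ gives
\[
\lfloor (k+1)q\rfloor - \lfloor kq\rfloor \le \lceil q\rceil .
\]
This is exactly the claimed bound $r_{k+1}-r_k\le \lceil (n-1)/(s-1)\rceil$, valid for every $k$.

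A slightly sharper route is also available: $\lfloor x\rfloor > x-1$ gives $r_{k+1}-r_k < (k+1)q - (kq-1) = q+1$. Since the difference is an integer strictly below $q+1$, it is at most $\lceil q\rceil$. I would present whichever of the two is cleaner.

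There is no genuine obstacle. The only points needing care are that $s\ge 2$, so $q$ is well defined, and the boundary behaviour $r_0=0$ and $r_{s-1}=n-1$. This confirms that the selected ranks span the full sorted list, although that fact is not needed for the gap bound itself.
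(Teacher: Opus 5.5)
Your proof is correct, and your second route is essentially the paper's argument: $r_{k+1}\le (k+1)q$ and $r_k > kq-1$ give an integer gap $<q+1$, hence at most $\lceil q\rceil$. The first route via $\lfloor a+b\rfloor\le\lfloor a\rfloor+\lceil b\rceil$ is the same estimate repackaged, and it gives the identical bound, so neither route is really sharper than the other.
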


Lemma~\ref{lem:rank_gap_short} implies that, in the sorted 1D order induced by $u$, no large contiguous block of ranks is skipped: every rank is within $\mathcal{O}(n/s)$ positions of a selected one. This does not guarantee Euclidean covering in $\R^d$, but it provides an efficient, iteration-wise notion of global coverage aligned with the slice geometry that empirically stabilizes PH-based optimization. The proof is deferred to Appendix \ref{apx:proofs}. Computationally, the sampler requires one projection pass ($\mathcal{O}(nd)$) and a sort ($\mathcal{O}(n\log n)$), typically negligible compared to repeated PH computations.

\paragraph{Geometric Intuition.}
\begin{wrapfigure}{r}{0.5\textwidth}
  \vspace{-0.6em}
  \centering
  \begin{minipage}{0.5\linewidth}
    \centering
    \includegraphics[width=\linewidth]{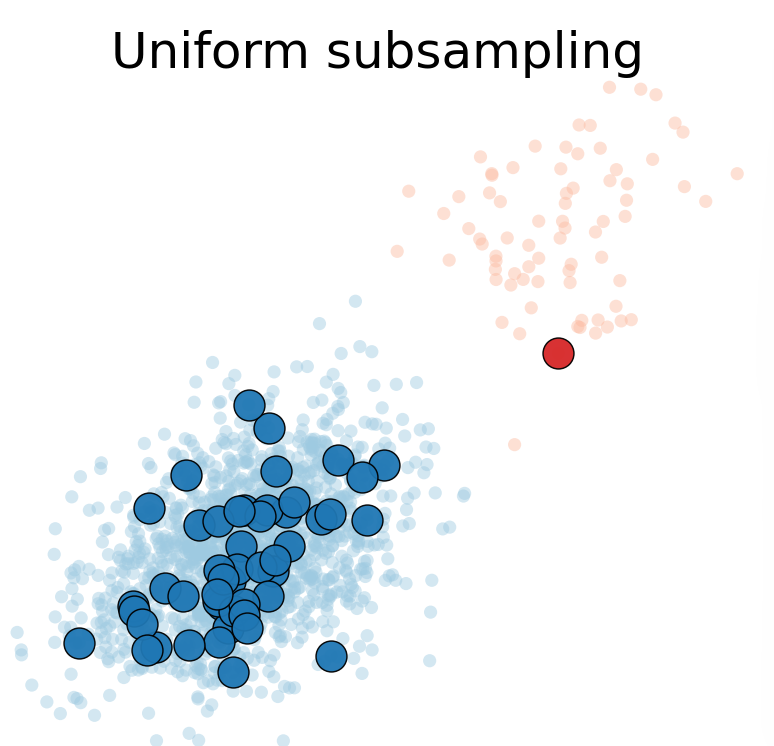}
  \end{minipage}\hspace{-0.01\linewidth}%
  \begin{minipage}{0.5\linewidth}
    \centering
    \includegraphics[width=\linewidth]{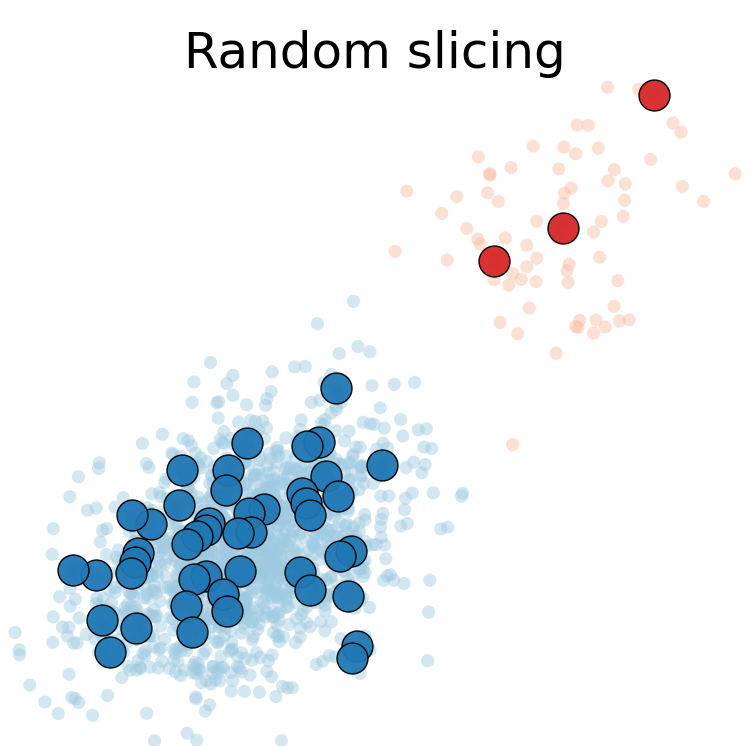}
  \end{minipage}

  \caption{Uniform subsampling vs.\ random slicing on an imbalanced two-cluster cloud. Light points: full point set; highlighted points: a subsample ($s=50$).}
  \label{fig:slicing_two_blobs}
  \vspace{-0.9em}
\end{wrapfigure}

\Cref{fig:slicing_two_blobs} illustrates how projection-stratified (random slicing) subsampling mitigates the density bias
of uniform sampling.
The cloud is a mixture of two separated clusters with highly imbalanced masses ($85\%$ vs.\ $15\%$).
Uniform subsampling concentrates almost entirely on the dominant cluster, so the minority cluster may be represented by
very few points for small budgets.
Random slicing instead selects evenly spaced ranks in a random 1D projection; the separation between clusters induces a
gap in projected values, so rank stratification tends to allocate points on both sides of the gap, consistently capturing
the low-mass cluster.
This iteration-wise notion of global coverage is a simple mechanism to reduce sampling bias in PH computations.

\subsection{NW-Convolution Smoothing}
\label{subsec:nw}

Exact RKHS interpolation constructs the minimum-norm field matching anchors $v(x_i) = g_i$, but requires solving a kernel system at every step, which can dominate runtime when repeated in an optimization loop. We propose a fast alternative that preserves smoothness, yields explicit regularity control, and empirically induces update fields that are often better aligned with
the geometric needs of persistence-driven optimization.

\begin{definition}[Gaussian weights and NW field] \label{def:nw_field}
Fix a bandwidth $\sigma>0$ and define the unnormalized Gaussian affinities
    \[ \phi_i(x) := \exp \left(-\frac{\norm{x - x_i}_2^2}{2\sigma^2} \right), \qquad i \in I. \]
Let $Z(x):=\sum_{j\in I}\phi_j(x)$ and define normalized weights
\begin{equation}
\label{eq:nw_weights}
w_i(x) := \frac{\phi_i(x)}{Z(x)}, \qquad \sum_{i\in I} w_i(x)=1,\quad w_i(x)\in(0,1).
\end{equation}
The \emph{Nadaraya-Watson (NW) convolution field} is
\begin{equation}
\label{eq:nw_def}
\bar v(x) := \sum_{i\in I} w_i(x)\, g_i \in \R^d.
\end{equation}
\end{definition}

First, we state the following characterization of the NW convolution field.
\begin{lemma}[Local kernel-regression characterization]
\label{lem:nw_variational}
For each $x \in \R^d$, $\bar v(x)$ is the unique minimizer of the strictly convex problem
\begin{equation}
\label{eq:local_wls}
    \bar v(x) = \mathop{\arg\min}_{v \in \R^d} \sum_{i \in I} \phi_i(x) \norm{ v - g_i}_2^2.
\end{equation}
\end{lemma}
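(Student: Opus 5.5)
Fix $x\in\R^d$ and write $F(v):=\sum_{i\in I}\phi_i(x)\norm{v-g_i}_2^2$. The plan is to show that $F$ is a strictly convex quadratic in $v$. Its unique minimizer is then the unique zero of its gradient, and we check that this zero is exactly $\bar v(x)$ from Definition~\ref{def:nw_field}. Throughout we assume, as implicit in Definition~\ref{def:nw_field}, that the anchor set $I$ is nonempty. Since $\phi_i(x)=\exp(-\norm{x-x_i}_2^2/(2\sigma^2))>0$ for every $i$, this gives $Z(x)=\sum_{j\in I}\phi_j(x)>0$, and this positivity is the only nontrivial ingredient.

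\textbf{Step 1: expand $F$.} Expanding each square gives
\[ F(v)=Z(x)\norm{v}_2^2-2\Big\langle v,\sum_{i\in I}\phi_i(x)g_i\Big\rangle+\sum_{i\in I}\phi_i(x)\norm{g_i}_2^2 . \]
The Hessian is $\nabla^2F=2Z(x)I_d$, which is positive definite because $Z(x)>0$. Hence $F$ is strongly, and in particular strictly, convex. It is also coercive, so a minimizer exists and is unique.

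\textbf{Step 2: solve the first-order condition.} We have $\nabla F(v)=2Z(x)v-2\sum_{i\in I}\phi_i(x)g_i$. Setting this to zero and dividing by $Z(x)>0$ gives
\[ v=\sum_{i\in I}\frac{\phi_i(x)}{Z(x)}\,g_i=\sum_{i\in I}w_i(x)g_i=\bar v(x). \]
By convexity, this stationary point is the global minimizer.

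\textbf{Alternative route.} One can avoid calculus by completing the square:
\[ F(v)=Z(x)\norm{v-\bar v(x)}_2^2+\mathrm{const}, \]
where the constant does not depend on $v$. This makes both the uniqueness and the identification of the minimizer immediate.

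There is no real obstacle here. The only point needing care is recording that $Z(x)>0$, which needs $I\neq\emptyset$; this positivity is what makes the problem strictly convex and the normalization in \eqref{eq:nw_weights} well defined.
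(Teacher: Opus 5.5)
Your proof is correct and follows essentially the same route as the paper: expand $F_x$, observe it is a quadratic in $v$ with leading coefficient $Z(x)>0$ (hence strictly convex), and solve $\nabla_v F_x(v)=0$ to recover $\bar v(x)=\sum_{i\in I} w_i(x)g_i$. Your explicit Hessian $2Z(x)I_d$, the remark that $Z(x)>0$ requires $I\neq\emptyset$, and the completing-the-square variant make precise points the paper leaves implicit.
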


Lemma~\ref{lem:nw_variational} shows that NW is the \emph{best local constant fit} to anchor gradients under Gaussian weights. This makes the role of $\sigma$ explicit: it controls the locality of the regression and hence the trade-off between (i) propagation of sparse information to non-anchors and (ii) fidelity to anchor updates. The proof is deferred to \Cref{apx:proofs}.

Note that for every $x \in \R^d$, $\bar v(x)$ lies in the convex hull of $\{g_i : i \in I\}$, and by convexity of $\norm{\cdot}_2$, $\norm{ \bar v(x)}_2 \leq \max_{i \in I} \norm{g_i}_2$. Besides, the following lemma holds.

\begin{lemma}[Stability w.r.t. anchor gradients]
\label{lem:stable_gradients}
Fix anchor locations $\{ x_i \}_{i \in I}$. Consider two anchor-gradient families $\{g_i\}$ and $\{g'_i\}$ and let $\bar v, \bar v'$ be the corresponding NW fields. Then
    \[
        \sup_{x \in \R^d} \norm{\bar v(x) - \bar v'(x)}_2 \le \max_{i \in I} \norm{g_i - g'_i}_2.
    \]
\end{lemma}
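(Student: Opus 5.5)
The plan is to exploit that the NW field is linear in the anchor gradients when the anchor locations, and hence the weights $w_i(x)$, are fixed. Since both fields $\bar v$ and $\bar v'$ use the same locations $\{x_i\}_{i\in I}$ and the same bandwidth $\sigma$, they share the same normalized weights from \eqref{eq:nw_weights}. Subtracting the two instances of \eqref{eq:nw_def} at a fixed $x\in\R^d$ therefore gives
\[
\bar v(x)-\bar v'(x)=\sum_{i\in I} w_i(x)\,(g_i-g'_i).
\]
In other words, the difference field is itself the NW field built from the anchor vectors $h_i:=g_i-g'_i$.

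The next step is to bound this convex combination. The weights satisfy $w_i(x)\in(0,1)$ and $\sum_{i\in I} w_i(x)=1$ (Definition~\ref{def:nw_field}). The triangle inequality and homogeneity of $\norm{\cdot}_2$ then give
\[
\norm{\bar v(x)-\bar v'(x)}_2\le \sum_{i\in I} w_i(x)\norm{h_i}_2\le \Big(\sum_{i\in I} w_i(x)\Big)\max_{i\in I}\norm{h_i}_2=\max_{i\in I}\norm{g_i-g'_i}_2 .
\]
This is exactly the convex-hull remark made just before the lemma, applied to the family $\{h_i\}$ instead of $\{g_i\}$.

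The bound holds for every $x$ and its right-hand side does not depend on $x$. Taking the supremum over $x\in\R^d$ therefore yields the claim.

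The only point that needs care is checking that the weights are well defined everywhere and identical for both fields. The normalizer $Z(x)=\sum_{j\in I}\phi_j(x)$ is strictly positive for every $x$, because each Gaussian factor is positive and $I$ is nonempty. No division by zero can occur, even far from the anchors. The weights depend only on the anchor locations and on $\sigma$, never on the gradients, which is what justifies the factorization in the first step. With this checked, the argument is a direct calculation and no real obstacle remains.
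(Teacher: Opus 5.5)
Your proof is correct and matches the paper's argument: you use the shared weights to write $\bar v(x)-\bar v'(x)=\sum_{i\in I} w_i(x)(g_i-g'_i)$, then bound its norm with the triangle inequality and $\sum_{i\in I} w_i(x)=1$. Your check that $Z(x)>0$ and that the weights do not depend on the gradients is a welcome detail the paper leaves implicit.
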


Lemma~\ref{lem:stable_gradients} implies that NW smoothing is \emph{non-expansive} in the anchor gradients: noise in sparse gradients is not amplified by the extension step. The proof is deferred to \Cref{apx:proofs}.

In the following, for each anchor $i \in I$, define
    \[ \rho_i(\sigma) := \sum_{j \in I \setminus \{i\}} \exp \left( - \frac{\norm{x_i - x_j}_2^2}{2\sigma^2} \right). \]

The following theorem studies when NW \emph{nearly} interpolates, i.e., when the exact weight at anchors are close to the corresponding gradients.
\begin{theorem}[Anchor error bound]
\label{thm:anchor_error}
    Assume $\norm{g_i}_2\le M$ for all $i \in I$. Then for each $i \in I$,
\begin{equation}
\label{eq:anchor_error}
    \norm{\bar v(x_i) - g_i}_2 \le 2 M \frac{\rho_i(\sigma)}{1 + \rho_i(\sigma)}.
\end{equation}
\end{theorem}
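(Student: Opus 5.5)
The bound follows from evaluating the weights of Definition~\ref{def:nw_field} at the anchor itself and writing $\bar v(x_i)-g_i$ as a weighted sum of differences. At $x=x_i$ we have $\phi_i(x_i)=\exp(0)=1$, and for $j\neq i$ the affinity $\phi_j(x_i)$ is exactly the $j$-th summand of $\rho_i(\sigma)$. Hence
\[
Z(x_i)=1+\rho_i(\sigma), \qquad w_i(x_i)=\frac{1}{1+\rho_i(\sigma)}, \qquad \sum_{j\in I\setminus\{i\}} w_j(x_i)=\frac{\rho_i(\sigma)}{1+\rho_i(\sigma)}.
\]

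Next, since $\sum_{j\in I} w_j(x_i)=1$ by \eqref{eq:nw_weights}, I will write
\[
\bar v(x_i)-g_i=\sum_{j\in I} w_j(x_i)\,(g_j-g_i)=\sum_{j\in I\setminus\{i\}} w_j(x_i)\,(g_j-g_i),
\]
where the $j=i$ term drops because $g_i-g_i=0$. Applying the triangle inequality gives
\[
\norm{\bar v(x_i)-g_i}_2\le\sum_{j\neq i} w_j(x_i)\,\norm{g_j-g_i}_2 .
\]
Each difference satisfies $\norm{g_j-g_i}_2\le\norm{g_j}_2+\norm{g_i}_2\le 2M$. Substituting the total off-anchor weight computed above yields $2M\rho_i(\sigma)/(1+\rho_i(\sigma))$, which is \eqref{eq:anchor_error}.

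There is no real obstacle here. The only point needing care is to subtract $g_i$ inside the sum using the normalization of the weights, rather than bounding $\norm{\bar v(x_i)}_2$ and $\norm{g_i}_2$ separately. The separate bound would not vanish as $\rho_i(\sigma)\to 0$, whereas the centered one does.
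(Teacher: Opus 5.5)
Your proof is correct and is essentially the paper's. Both compute $w_i(x_i)=1/(1+\rho_i(\sigma))$ and use $1-w_i(x_i)=\sum_{j\neq i}w_j(x_i)$. Your centered sum $\sum_{j\neq i}w_j(x_i)(g_j-g_i)$ is algebraically identical to the paper's $(w_i(x_i)-1)g_i+\sum_{j\neq i}w_j(x_i)g_j$; you simply apply the triangle inequality term by term via $\norm{g_j-g_i}_2\le 2M$, whereas the paper bounds the two pieces separately by $M$ each.
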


Theorem~\ref{thm:anchor_error} formalizes the \emph{fidelity-propagation trade-off}: small $\sigma$ reduces cross-anchor mixing (small $\rho_i(\sigma)$) and yields near-interpolation, whereas larger $\sigma$ increases global propagation but mixes anchor directions.

In the following, we provide some regularity properties of $\bar v$. First, notice that $\bar v$ is $\mathcal{C}^\infty$ on $\R^d$. Indeed, each $\phi_i$ is $\mathcal{C}^\infty$ and $Z(x) > 0$ for all $x$ since all $\phi_i(x) > 0$. Thus $w_i = \frac{\phi_i}{Z}$ is $\mathcal{C}^\infty$ and $\bar v = \sum_i w_i g_i$ is $\mathcal{C}^\infty$. Furthermore, we have the following Lipschitz regularity property.

\begin{theorem}[Global Lipschitz constant of the NW field]
\label{thm:nw_lipschitz}
Assume $\norm{g_i}_2\le M$ for all $i \in I$. Then $\bar v$ is globally Lipschitz and
\begin{equation}
\label{eq:nw_lipschitz}
    \|D\bar v(x)\|_{\mathrm{op}} \le \frac{\diam(P)}{\sigma^2} M \qquad \forall x \in \R^d.
\end{equation}
In particular, $\norm{\bar v(x) - \bar v(y)}_2 \le L \norm{x-y}_2$ for all $x, y$ with $L := \frac{\diam(P) M}{\sigma^2}$.
\end{theorem}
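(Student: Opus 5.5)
We will compute the Jacobian of $\bar v$ explicitly and bound its operator norm; here $P=\{x_i\}_{i\in I}$ denotes the anchor set. Once the pointwise derivative bound holds, the global Lipschitz statement follows from the mean value inequality along segments: since $\bar v$ is $\mathcal{C}^\infty$ on the convex set $\R^d$,
\[
\norm{\bar v(x)-\bar v(y)}_2\le \sup_{z}\|D\bar v(z)\|_{\mathrm{op}}\norm{x-y}_2 .
\]

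The first step is differentiating the weights. From $\nabla\phi_i(x)=-\phi_i(x)(x-x_i)/\sigma^2$ we get $\nabla \log\phi_i = (x_i-x)/\sigma^2$. Writing $\mu(x):=\sum_j w_j(x)x_j$ for the weighted anchor mean, the quotient rule gives
\[
\nabla w_i(x)=w_i(x)\Bigl(\nabla\log\phi_i-\sum_j w_j\nabla\log\phi_j\Bigr)=\frac{w_i(x)}{\sigma^2}\bigl(x_i-\mu(x)\bigr).
\]
The $x$-terms cancel, and this cancellation is the key point.

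Consequently
\[
D\bar v(x)=\frac{1}{\sigma^2}\sum_{i\in I} w_i(x)\, g_i\,(x_i-\mu(x))^\top .
\]
Because $\sum_i w_i(x)(x_i-\mu(x))=0$, we may replace $g_i$ by $g_i-c$ for any constant $c$. This is the covariance form of the Jacobian, but we do not need to exploit it.

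For a unit vector $h$ we then bound
\[
\norm{D\bar v(x)h}_2\le\frac{1}{\sigma^2}\sum_i w_i\norm{g_i}_2\,|\langle x_i-\mu(x),h\rangle|\le \frac{M}{\sigma^2}\sum_i w_i\norm{x_i-\mu(x)}_2 .
\]
Since $\mu(x)$ is a convex combination of the anchors, it lies in their convex hull. Hence $\norm{x_i-\mu(x)}_2\le\max_j\norm{x_i-x_j}_2\le\diam(P)$, using that the diameter of a convex hull equals that of the set. Summing with weights that total one yields $\|D\bar v(x)\|_{\mathrm{op}}\le \diam(P)M/\sigma^2$.

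The main obstacle is the cancellation in the derivative of the normalized weights. A naive bound of $\nabla w_i$ involving $\norm{x-x_i}$ would blow up as $x\to\infty$. Expressing the gradient through $x_i-\mu(x)$ is what makes the estimate uniform in $x$. The remaining steps are routine.
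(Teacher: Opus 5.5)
Your proof is correct and follows the paper's argument. The paper also derives $\nabla w_i(x)=\frac{w_i(x)}{\sigma^2}\bigl(x_i-\bar x(x)\bigr)$ with $\bar x(x)$ the weighted anchor barycenter, bounds $\|x_i-\bar x(x)\|_2\le\diam(P)$ via the convex hull, sums the rank-one contributions $\|\nabla w_i(x)\|_2\|g_i\|_2$ using weights that total one, and finishes with the mean value theorem; your log-derivative computation and direct bound on $\|D\bar v(x)h\|_2$ differ only cosmetically.
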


The proof of \Cref{thm:nw_lipschitz} is deferred to \Cref{apx:proofs}. The bound scales as $1 / \sigma^2$: larger bandwidth yields smoother (more stable) deformations. The factor $\diam(P)$ captures the geometric spread of anchors: if anchors are confined to a small region, the induced
field cannot vary too rapidly.

\paragraph{Geometric intuitions on NW smoothing.}


\begin{wrapfigure}{r}{0.52\linewidth}
\vspace{-1.0em}
\centering
\includegraphics[width=\linewidth]{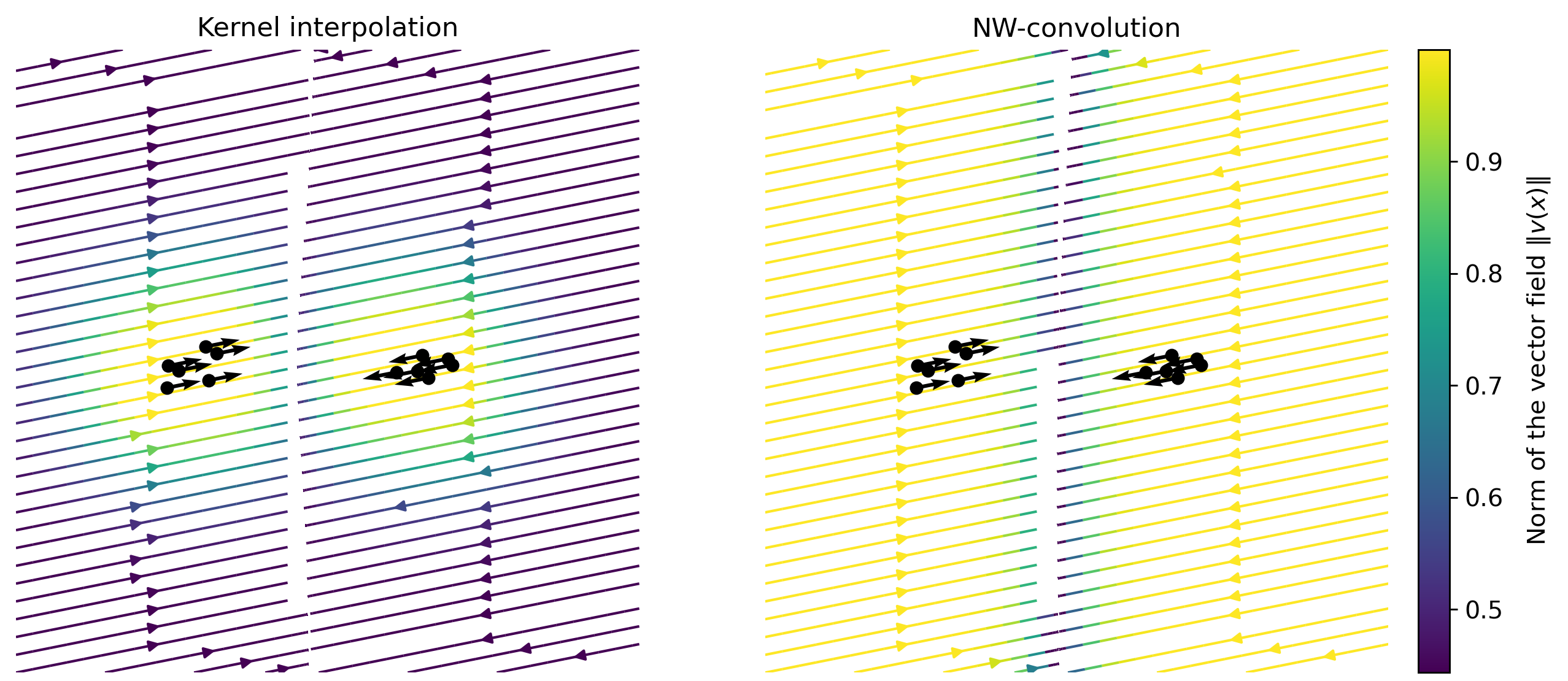}
\vspace{-0.8em}
\caption{Kernel vs.\ NW smoothing on the same anchors.
Two spatial clusters of anchors carry opposite directions. Left: exact Gaussian-RKHS interpolation (\textsc{Kernel}). Right: NW smoothing (\textsc{NW}). Colors show the norm of the smoothed vector field.}
\label{fig:nw_vs_kernel_intuition}
\vspace{-1.2em}
\end{wrapfigure}
To build intuition for the behavioral gap between exact kernel-based diffeomorphic interpolation and our NW-Convolution smoothing, we consider a deliberately simple anchor configuration (Fig.~\ref{fig:nw_vs_kernel_intuition})
that isolates the key mechanism: \emph{normalization}.
We place anchors in two tight clusters with opposite vectors, and compare the induced ambient fields for a fixed bandwidth $\sigma = 0.35$.

In the RKHS construction (used by \cite{carriere2024diffeomorphic}), the vector field takes the form $v_{\mathrm{ker}}(x)=\sum_{j} k_\sigma(x,x_j)\,\alpha_j$, where the coefficients $\alpha$ are chosen so that
$v_{\mathrm{ker}}(x_i)=g_i$ at anchors.
Because the expansion uses \emph{unnormalized} Gaussians, the overall magnitude typically \emph{decays away from the anchor set}: when $x$ is far from every $x_j$, all kernels $k_\sigma(x,x_j)$ become small simultaneously.
This induces a field that is strongly localized around the regions that contain anchors (and is visually reflected by the
attenuation of $\|v(x)\|$ away from the clusters).

In contrast, NW defines a convex combination of anchor vectors $v_{\mathrm{NW}}(x) = \sum_{i \in I} w_i(x) g_i$ (\Cref{def:nw_field}), where the normalization enforces $\sum_j w_j(x)=1$ for all $x$, so $v_{\mathrm{NW}}(x)$ behaves as a \emph{local barycentric
average} of the anchor gradients.
Two qualitative consequences follow.
First, the field does not vanish simply because $x$ is far from anchors: when all kernels are simultaneously small, the
ratio defining $w_j(x)$ remains well-defined and tends to a near-uniform weighting across anchors, so the far-field
approaches a (typically nonzero) average direction.
Second, regions dominated by one cluster naturally inherit that cluster's direction, while the transition zone between
clusters exhibits a smooth cancellation pattern (the averaging mediates the conflict rather than enforcing exact matching).

In topological optimization, topological gradients are sparse and highly local: only a few points (anchors) receive nonzero updates.
Exact interpolation propagates these updates but can produce highly localized magnitude profiles and requires solving a
kernel system. NW smoothing instead provides a cheap, stable propagation mechanism whose geometry is easy to interpret: it transports the
anchor signal via normalized similarity weights, yielding a globally defined field with controlled mixing across competing anchors.

In the two-cluster setting of \Cref{fig:nw_vs_kernel_intuition}, a natural desired smoothing is to propagate the influence of the
opposing anchor groups across the domain so that points on each side are consistently driven in the direction prescribed
by the nearest cluster, rather than receiving a near-zero update away from anchors. This effect is precisely what the normalized NW averaging promotes, and it helps explain its empirical effectiveness (see \Cref{sec:experimental-results}) as a faster surrogate to exact kernel interpolation. Additional configurations illustrating the same phenomena are provided and discussed in Appendix~\ref{app:nw_more_fields}.

\paragraph{Practical considerations: sensitivity to the bandwidth \texorpdfstring{$\sigma$}{sigma}.}

\begin{wrapfigure}{r}{0.42\textwidth} 
  \centering
  \vspace{-1\baselineskip} 
  \includegraphics[width=\linewidth]{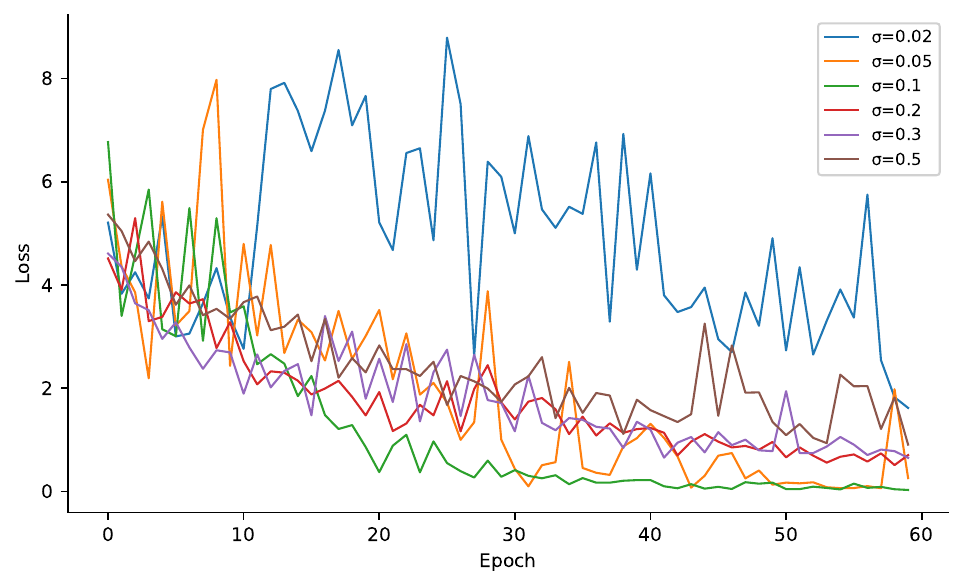}
  \caption{Loss trajectories (collapser loss) on a noisy circle in $\R^2$ for different fixed $\sigma$ values, using the same
  diffeomorphic update scheme.}
  \label{fig:nw_sigma_sweep}
  \vspace{-0.5\baselineskip} 
\end{wrapfigure}

While NW-Convolution smoothing provides a fast and principled extension of sparse topological gradients, its behavior is
strongly controlled by the bandwidth parameter $\sigma$, which sets the locality of averaging.
To quantify this sensitivity, we run NW-smoothing with diffeomorphic updates on a noisy circle in $\R^2$ (Gaussian noise added to a unit circle) using the \emph{collapser} loss, i.e.\ a persistence objective that penalizes large birth and death times and thus encourages rapid topological simplification (see Appendix~\ref{app:collapser_2d} for a formal definition of the loss
and visualizations for this task). \Cref{fig:nw_sigma_sweep} reports the loss trajectories for several fixed values of $\sigma$. Beyond the existence of a clearly favorable range (here around $\sigma=0.1$), the curves exhibit substantial variability in both convergence speed and attained objective value, underscoring that $\sigma$ is a critical hyper-parameter rather than a minor adjustment. This observation motivates \emph{learning} (or adaptively scheduling) $\sigma$ during optimization, we provide more details about this point in \Cref{sec:experimental-results}.

\section{Experimental Results}\label{sec:experimental-results}

We evaluate our pipeline \emph{subsampling via random slicing and NW-Convolution smoothing}\footnote{In our plots we label our approach as \textsc{NW-Convolution}, this always refers to the \emph{combination} of NW smoothing \emph{and} random slicing subsampling.} on a challenging $3$D topological augmentation task. We consider the Stanford Bunny point cloud~\citep{bunny}, i.e.\ the mesh vertices, yielding an initial point set $X_0 = \{x_i\}_{i=1}^n \subset \R^3$, where $n = 35,947$. We optimize a persistence-based objective that promotes prominent two-dimensional topological features, i.e.\ we encourage the formation (or strengthening) of a cavity by increasing its persistence.

\paragraph{Loss.}
We use the \emph{bunny loss}, a topological augmentation objective defined by
\begin{equation}
\label{eq:bunny_loss}
\mathcal{L}_{\mathrm{bunny}}(X)
= -\sum_{\alpha\in\Dgm_2(X)}\Big(\tfrac12(d_\alpha-b_\alpha)\Big)^2
+ \sum_{i=1}^n \max\!\big(\|x_i\|_\infty-1,\,0\big),
\end{equation}
i.e.\ a (negative) squared total persistence term in dimension $2$, together with a soft box penalty that discourages points
from leaving $[-1,1]^3$.

\paragraph{Learning the bandwidth $\sigma$.}
The bandwidth $\sigma$ controls the locality of NW-Convolution smoothing and strongly impacts the resulting deformation
(\Cref{fig:nw_sigma_sweep}). Rather than treating $\sigma$ as a fixed hyper-parameter, we can \emph{learn} it online by
backpropagating through the NW weights.
Concretely, we parameterize a positive bandwidth as
\begin{equation}
\label{eq:sigma_param}
\sigma(\theta) = \sigma_0 \exp(\theta),
\end{equation}
with a scalar $\theta\in\R$.
At each iteration, we first compute the raw sparse PH gradient $g_{\mathrm{raw}}(X)$ (without internal smoothing), build
the NW movement field $\bar v_\theta(X)$ using \eqref{eq:sigma_param} inside the Gaussian weights, and form a one-step
look-ahead point cloud
\begin{equation}
\label{eq:lookahead}
X^{+}(\theta) = X - \eta_X \,\bar v_\theta(X),
\end{equation}
where $\eta_X$ is the point-update step size.
We then update $\theta$ by minimizing the \emph{look-ahead} loss
\begin{equation}
\label{eq:meta_sigma}
\theta \leftarrow \theta - \eta_\sigma \,\nabla_\theta \,\mathcal{L}\!\left(X^{+}(\theta)\right),
\end{equation}
with $\eta_\sigma$ a separate learning rate.
In implementation, we stop gradients through $X$ and $g_{\mathrm{raw}}$ when differentiating \eqref{eq:meta_sigma} (i.e.,
treat them as constants during the $\theta$ update), which yields a stable and inexpensive meta-gradient while still
adapting $\sigma$ to the local optimization regime.
This procedure can be interpreted as selecting the bandwidth that makes the \emph{next} deformation step most beneficial
for the topological objective, and we use it as a practical alternative to manual tuning when the optimal $\sigma$ varies substantially across tasks and datasets. Pseudo-codes of the algorithms for learning $\sigma$ and our NW-Convolution are given in \Cref{app:algo}.

\paragraph{Compared methods.}
We compare:
(i) \textbf{Vanilla} topological gradient descent, which updates only anchor points (sparse gradient); (ii) \textbf{Kernel} diffeomorphic interpolation~\cite{carriere2024diffeomorphic}, using exact Gaussian-RKHS interpolation; and (iii) \textbf{NW-Convolution (ours)}, which replaces the kernel solve with NW smoothing and uses random slicing for subsampling.

\paragraph{Hyper-parameters.}
For all methods we run $T=400$ optimization steps with subsampling size $s=100$ and compute persistent homology in dimension $2$.
Vanilla and kernel-based diffeomorphic interpolation use SGD with learning rate $\eta_X=0.1$, for the kernel baseline we
set the Gaussian bandwidth to $\sigma=0.05$.
For our approach, we use the same subsampling budget but replace uniform subsampling by projection-stratified sampling and apply the NW-Convolution smoother with base bandwidth $\sigma_0=5\times 10^{-3}$ and step size
$\eta_X=0.18$.

\paragraph{Results.}
\Cref{fig:bunny_three_methods} shows the initial bunny and the final point clouds after $400$ epochs for the three optimizers, together with the loss evolution. \Cref{tab:bunny_three_methods} reports the best loss achieved over the run and the average runtime per epoch. Our method achieves substantially lower objective values than both baselines while remaining close to the runtime of the vanilla sparse update. In particular, NW smoothing introduces only a small overhead compared to no smoothing, whereas exact kernel interpolation is an order of magnitude slower in this experiment.

\begin{figure}[t]
    \centering
    \includegraphics[width=0.9\linewidth]{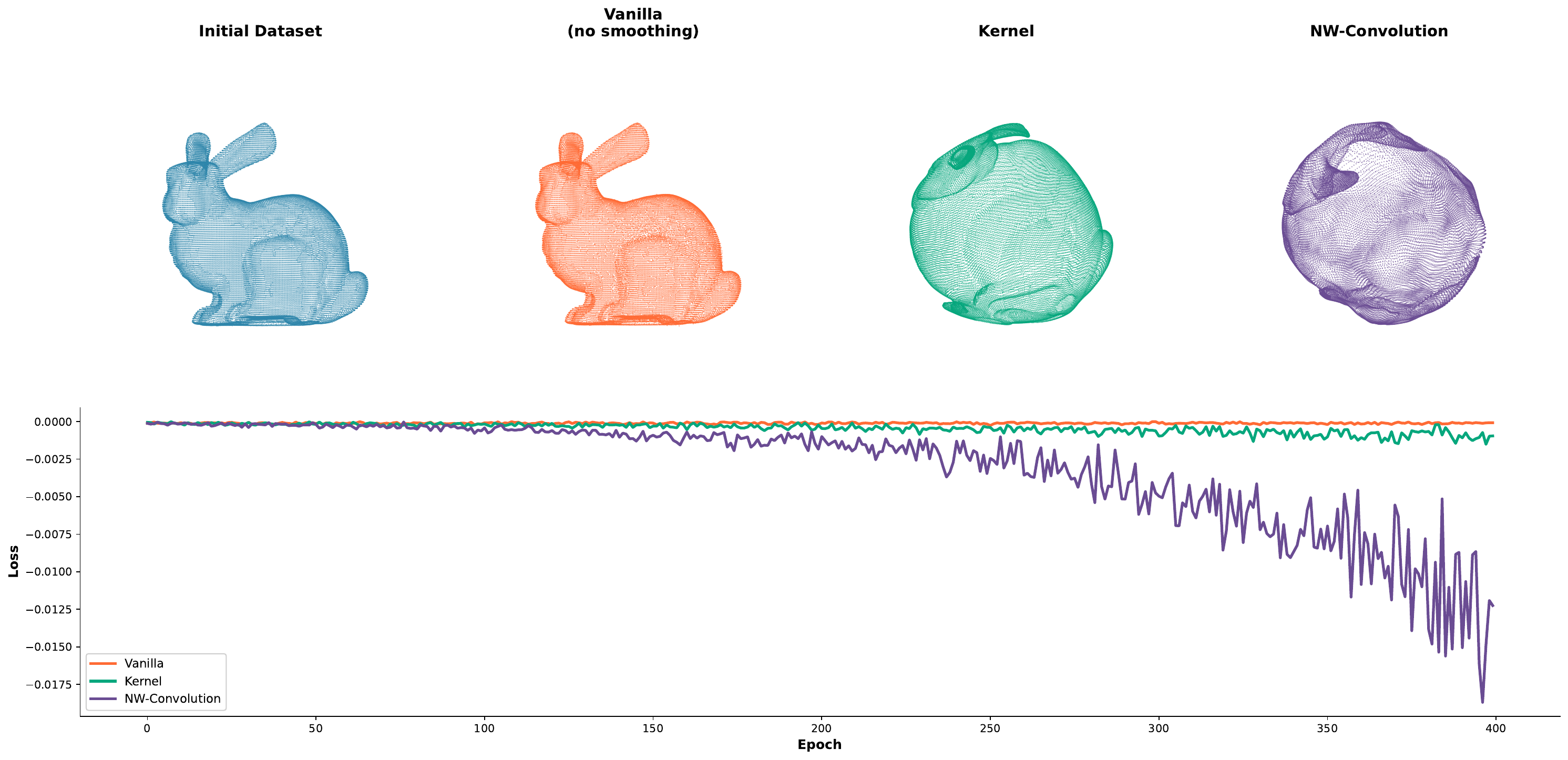}
    \caption{3D bunny augmentation task. Top: initial point cloud (Stanford Bunny) and point clouds after $400$ epochs for vanilla sparse updates (\textsc{Vanilla}), exact kernel-based diffeomorphic interpolation (\textsc{Kernel}), and our approach (\textsc{NW-Convolution}, i.e., random slicing subsampling and NW smoothing). Bottom: loss curves versus epochs for the three methods.}
    \label{fig:bunny_three_methods}
\end{figure}

\begin{table}[t]
  \centering
  \footnotesize
  \setlength{\tabcolsep}{10pt}
  \renewcommand{\arraystretch}{1.2}
  \caption{\textbf{Bunny augmentation: best loss and runtime.}
  Best objective value (lower is better) and average time per epoch, averaged over the full run.}
  \label{tab:bunny_three_methods}
  \begin{tabular}{lrr}
    \toprule
    \textbf{Method} & \textbf{Best Loss} $\downarrow$ & \textbf{Avg Time/Epoch (s)} $\downarrow$ \\
    \midrule
    Vanilla   & -0.00026 & \textbf{2.044} \\
    Kernel & -0.00151 &  19.979 \\
    \rowcolor{gray!15} \textbf{NW-Convolution (ours)} &  \textbf{-0.01869}  & 2.573 \\
    \bottomrule
  \end{tabular}
\end{table}

Further experiments in 2D can be found in Appendix \ref{app:more_2d}.

\section{Conclusion}
We revisited a central practical bottleneck in persistence-based topological optimization: the sparsity and instability of topological gradients when PH is computed on subsamples. Building on the ambient-smoothing viewpoint of diffeomorphic
updates, we proposed a lightweight alternative to exact kernel interpolation: a Nadaraya-Watson Gaussian convolution that extends sparse anchor gradients without solving a kernel system, together with a simple projection-stratified (random slicing) subsampling scheme to improve geometric coverage. Across our benchmarks in 2D and 3D, this combination yields substantially faster iterations than kernel-based diffeomorphic interpolation while achieving improved objective values. A key limitation of the present work is that we focus on controlled synthetic and standard geometry benchmarks, we do not yet demonstrate end-to-end impact on downstream real-world tasks. We view our contribution primarily as a step towards making persistence-driven optimization more scalable and, ultimately, more practical in application settings where large point sets and repeated PH computations are unavoidable.

\clearpage

\newpage

\newpage
\bibliography{iclr2026_conference}
\bibliographystyle{iclr2026_conference}
\clearpage

\appendix
\section{Algorithms}\label{app:algo}

\subsection{Projection-stratified subsampling (random slicing)}
\label{app:random_slicing}

\begin{algorithm}[H]
\caption{Projection-stratified subsampling (random slicing)}
\label{alg:random_slicing}
\begin{algorithmic}[1]
\Require Point cloud $X=\{x_i\}_{i=1}^n\subset\R^d$, subsample size $s$
\State Sample $u\sim\mathcal{N}(0,I_d)$ and normalize $u\leftarrow u/\|u\|_2$
\State Compute projections $t_i\leftarrow \langle x_i,u\rangle$ for $i=1,\dots,n$ 
\State Let $\pi$ be the permutation sorting $(t_i)$ increasingly 
\For{$k=0$ \textbf{to} $s-1$}
  \State $r_k \leftarrow \left\lfloor k(n-1)/(s-1)\right\rfloor$  \quad \cmt{evenly spaced ranks}
  \State Add index $\pi(r_k)$ to $S$
\EndFor
\State \Return $S$
\end{algorithmic}
\end{algorithm}

\subsection{Learning the bandwidth $\sigma$ (one-step look-ahead)}
\label{app:learn_sigma}

\begin{algorithm}[H]
\caption{\textsc{LearnSigmaStep}: one-step look-ahead update for $\sigma$}
\label{alg:learn_sigma}
\begin{algorithmic}[1]
\Require Current cloud $X\in\R^{n\times d}$, sparse gradient $g\in\R^{n\times d}$, anchors $I$
\Require Base bandwidth $\sigma_0$, current $\theta$ \qquad \cmt{so that $\sigma=\sigma_0\exp(\theta)$}
\Require Step sizes $\eta_X$ (for $X$) and $\eta_\sigma$ (for $\theta$), subsample $S$ 
\State $\sigma \leftarrow \sigma_0\exp(\theta)$ \qquad \cmt{enforce positivity}
\State $X^{\mathrm{sg}}\leftarrow \mathrm{stopgrad}(X)$, $g^{\mathrm{sg}}\leftarrow \mathrm{stopgrad}(g)$ 
\State Build $\bar v_\theta(\cdot)$ from $(x_i,g_i^{\mathrm{sg}})_{i\in I}$ using NW weights with bandwidth $\sigma$
\State $\tilde X \leftarrow X^{\mathrm{sg}} - \eta_X\,\bar v_\theta(X^{\mathrm{sg}})$ \qquad \cmt{look-ahead point cloud}
\State $\theta \leftarrow \theta - \eta_\sigma\, \nabla_\theta \,\ell(\Dgm(\tilde X_{S}))$ \qquad \cmt{meta-gradient step}
\State \Return $\sigma_0\exp(\theta)$
\end{algorithmic}
\end{algorithm}

\subsection{NW-Convolution update with random slicing (ours)}
\label{app:nw_algo}

\begin{algorithm}[H]
\caption{NW-Convolution smoothing with random slicing (ours)}
\label{alg:nw_random_slicing}
\begin{algorithmic}[1]
\Require Initial $X_0\in\R^{n\times d}$, loss $\ell$, bandwidth $\sigma$ \cmt{fixed, or learned via Alg.~\ref{alg:learn_sigma}}
\Require Step size $\eta_X$, subsample size $s$, epochs $T$
\For{$t=0$ \textbf{to} $T-1$}
    \State $S_t \leftarrow$ \textsc{RandomSlicing}$(X_t,s)$ \cmt{Alg.~\ref{alg:random_slicing}}
    \State Compute PH on $X_t[S_t]$ and sparse gradient $g^{(t)}=\nabla \ell(\Dgm(X_t[S_t]))$ 
    \State $I_t \leftarrow \{i\in S_t : g_i^{(t)}\neq 0\}$ \qquad \cmt{anchor indices}
    \If{\textbf{learning} $\sigma$}
        \State $\sigma \leftarrow$ \textsc{LearnSigmaStep}$(X_t,g^{(t)},I_t,\sigma)$ \qquad \cmt{Alg.~\ref{alg:learn_sigma}}
    \EndIf
    \ForAll{$x\in X_t$} \qquad \cmt{evaluate NW field on all points}
        \State $\phi_i \leftarrow \exp\!\big(-\|x-x_i\|_2^2/(2\sigma^2)\big)$ for all $i\in I_t$
        \State $Z \leftarrow \sum_{j\in I_t} \phi_j$
        \State $\bar v_t(x) \leftarrow \sum_{i\in I_t} (\phi_i/Z)\, g_i^{(t)}$
        \State $x \leftarrow x - \eta_X\, \bar v_t(x)$ 
    \EndFor
\EndFor
\State \Return $X_T$
\end{algorithmic}
\end{algorithm}
\clearpage

\section{Omitted Proofs} \label{apx:proofs}

\begin{proof}[Proof of \Cref{lem:rank_gap_short}]
Let $\Delta=\frac{n-1}{s-1}$. Since $r_k=\lfloor k\Delta\rfloor$,
\[ r_{k+1} \leq (k+1) \Delta, \]
and 
\[ - r_k < -( k \Delta - 1) . \]
Hence 
\begin{align*}
    r_{k+1} - r_k & = < \Delta + 1,
\end{align*}
hence \[r_{k+1} - r_k \le \lceil \Delta \rceil,\] which yields the bound.
\end{proof}

\begin{proof}[Proof of \Cref{lem:nw_variational}]
Let 
    \[ F_x(v) := \sum_{i \in I} \phi_i(x) \norm{v-g_i}_2^2.\]
Expanding,
    \[ F_x(v) = \left( \sum_i \phi_i(x) \right) \norm{v}_2^2 - 2 \langle v,\sum_i \phi_i(x) g_i \rangle + C(x), \]
hence $F_x$ is strictly convex and differentiable with
    \[\nabla_v F_x(v) = 2 Z(x) v - 2 \sum_i \phi_i(x) g_i. \]

Setting $\nabla_v F_x(v) = 0$ yields 
    \[ v = \sum_i \left( \frac{\phi_i(x)}{Z(x)} \right)g_i=\sum_i w_i(x)g_i=\bar v(x). \]
\end{proof}

\begin{proof}[Proof of \Cref{lem:stable_gradients}]
For any $x$, 
    \[\bar v(x) - \bar v'(x) = \sum_i w_i(x) (g_i - g'_i). \]
Taking norms and using $\sum_i w_i(x) = 1$,
\begin{align*}
    \norm{\bar v(x) - \bar v'(x)}_2 & \le \sum_i w_i(x) \|g_i - g'_i\|_2 \\
    &\le \max_i \|g_i - g'_i\|_2.
\end{align*}
\end{proof}

\begin{proof}[Proof of \Cref{thm:anchor_error}]

For each $i \in I$, we have,
    \[w_i(x_i) = \frac{1}{1 + \rho_i(\sigma)},\] 
and,
    \[\sum_{j \neq i} w_j(x_i) = \frac{\rho_i(\sigma)}{1 + \rho_i(\sigma)}\]


Write
$\bar v(x_i) = w_i(x_i) g_i + \sum_{j \neq i} w_j(x_i) g_j$.
Then
    \[ \bar v(x_i) - g_i = (w_i(x_i) - 1) g_i + \sum_{j \neq i} w_j(x_i) g_j. \]
Taking norms and using $\norm{g_j}_2 \le M$,
    \begin{align*}
        \norm{\bar v(x_i) - g_i}_2 &\le (1 - w_i(x_i)) \norm{g_i}_2 + \sum_{j \neq i} w_j(x_i) \norm{g_j}_2 \\ &\le M (1 - w_i(x_i)) + M \sum_{j \neq i} w_j(x_i). 
    \end{align*}
Furthermore,
    \[ 1 - w_i(x_i) = \dfrac{\rho_i(\sigma)}{1 + \rho_i(\sigma)} \text{ and } \sum_{j \neq i} w_j(x_i) = \dfrac{\rho_i(\sigma)}{1 + \rho_i(\sigma)}, \]
yielding \eqref{eq:anchor_error}.
\end{proof}

\begin{proof}[Proof of \Cref{thm:nw_lipschitz}]

We start by stating and proving the following lemma which provides an expression of the gradient of the weights and bounds its norm .

\begin{lemma}[Gradient of weights]
\label{lem:grad_weights}
Let $\bar x(x):=\sum_{j \in I} w_j(x) x_j$ be the weighted barycenter of anchors at $x$.
Then for all $i \in I$ and $x \in \R^d$,
\begin{equation} \label{eq:grad_w}
    \nabla w_i(x) = \frac{w_i(x)}{\sigma^2} \left( x_i - \bar x(x) \right).
\end{equation}
Moreover, for all $x$ and $i$,
    \[ \norm{\nabla w_i(x)}_2 \le \frac{\diam(P)}{\sigma^2}w_i(x), \]
where $\diam(P) := \max_{p,q\in I}\|x_p - x_q\|_2$
\end{lemma}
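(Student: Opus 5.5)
I will prove Lemma~\ref{lem:grad_weights} by differentiating the quotient $w_i=\phi_i/Z$ directly, using the log-derivative of the Gaussian affinities.

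\emph{Gradient formula.} First, for each $j\in I$,
\[
\nabla\phi_j(x)=-\frac{x-x_j}{\sigma^2}\,\phi_j(x)=\frac{x_j-x}{\sigma^2}\,\phi_j(x).
\]
Summing over $j$ and using $\phi_j=w_jZ$ together with $\sum_j w_j=1$, I get
\[
\nabla Z(x)=\frac{Z(x)}{\sigma^2}\sum_{j}w_j(x)(x_j-x)=\frac{Z(x)}{\sigma^2}\bigl(\bar x(x)-x\bigr).
\]
Next I apply the quotient rule in logarithmic form, $\nabla w_i=w_i(\nabla\log\phi_i-\nabla\log Z)$. This gives
\[
\nabla w_i=\frac{w_i}{\sigma^2}\bigl[(x_i-x)-(\bar x(x)-x)\bigr]=\frac{w_i}{\sigma^2}\bigl(x_i-\bar x(x)\bigr).
\]
The $x$ terms cancel, which is exactly \eqref{eq:grad_w}. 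This step is routine; it only requires care with signs.

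\emph{Norm bound.} I write the difference as a convex combination:
\[
x_i-\bar x(x)=\sum_j w_j(x)(x_i-x_j).
\]
The triangle inequality then gives
\[
\norm{x_i-\bar x(x)}_2\le\sum_j w_j(x)\norm{x_i-x_j}_2\le\diam(P)\sum_j w_j(x)=\diam(P).
\]
Multiplying by $w_i(x)/\sigma^2\ge0$ yields the claimed bound. Equivalently, $\bar x(x)$ lies in the convex hull of the anchors, and the distance from any anchor to a point of that hull is at most $\diam(P)$.

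\emph{Main obstacle.} Neither step is hard. The only point needing care is recognising that normalisation makes the $x$-dependence cancel, so the gradient depends on $x$ only through the weights and the barycenter. This cancellation is what makes the bound uniform in $x$, and hence global.
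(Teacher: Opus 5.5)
Your proof is correct and follows the paper's argument. The paper likewise computes $\nabla\phi_i$ and $\nabla Z$ and applies the quotient rule; your logarithmic form is equivalent, since $\phi_i, Z>0$. It then bounds $\|x_i-\bar x(x)\|_2$ by $\diam(P)$ because $\bar x(x)$ lies in the convex hull of the anchors, and your convex-combination plus triangle-inequality step simply spells out that remark.
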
 

\begin{proof}[Proof of \Cref{lem:grad_weights}]
Using 
    \[\phi_i(x) = \exp\left(- \dfrac{\norm{x - x_i}^2}{2 \sigma^2} \right) \quad \text{ and } \quad w_i= \dfrac{\phi_i}{Z}, \]

we have, 
    \[ \nabla\phi_i(x) = - ( x - x_i ) \frac{\phi_i(x)}{\sigma^2} , \] 
and
    \[\nabla Z(x) = \sum_j \nabla \phi_j(x) = - \dfrac{1}{\sigma^2}  \sum_j ( x -x_j ) \phi_j(x). \]
By the quotient rule,
\begin{align*}
   \nabla w_i(x)& =\dfrac{\nabla\phi_i(x) Z(x)-\phi_i(x)\nabla Z(x)}{Z(x)^2} \\ &= -\dfrac{\phi_i(x)}{\sigma^2 Z(x)}
\Big[(x-x_i)-\sum_j w_j(x)(x-x_j)\Big]. 
\end{align*}

Since 
\begin{align*}
    \sum_j w_j(x)(x-x_j) &= x-\sum_j w_j(x)x_j
    \\ &=x-\bar x(x),
\end{align*} 
we obtain \eqref{eq:grad_w}.

Hence, 
    \[ \|\nabla w_i(x)\|_2 = \dfrac{w_i(x)}{\sigma^2}\|x_i-\bar x(x)\|_2. \]
Since $\bar x(x)$ is a convex combination of points in $P$, it lies in $\mathrm{Conv}(P)$, hence 
    \[ \|x_i-\bar x(x)\|_2 \le \diam(P). \]

\end{proof}

Differentiate $\bar v(x)=\sum_i w_i(x)g_i$:
    \[ D \bar v(x)=\sum_{i\in I} ( \nabla w_i(x) )\otimes g_i.\]
Using $\norm{a \otimes b}_{\mathrm{op}}=\|a\|_2 \|b\|_2$,
    \begin{align*}
       \norm{D \bar v(x)}_{\mathrm{op}} &\le \sum_i \norm{\nabla w_i(x)}_2 \norm{g_i}_2 \\ &\le M \sum_i \norm{\nabla w_i(x)}_2.  
    \end{align*} 
Apply Lemma~\ref{lem:grad_weights}:
    \begin{align*}
        \sum_i \norm{\nabla w_i(x)}_2 &\le \dfrac{\diam(P)}{\sigma^2}\sum_i w_i(x) \\ &= \dfrac{\diam(P)}{\sigma^2}, 
    \end{align*}
giving \eqref{eq:nw_lipschitz}. The Lipschitz inequality follows from the mean value theorem.
\end{proof}

\clearpage

\section{Additional kernel vs.\ NW visualizations}
\label{app:nw_more_fields}

\begin{figure}[H]
    \centering
    \includegraphics[width=\linewidth]{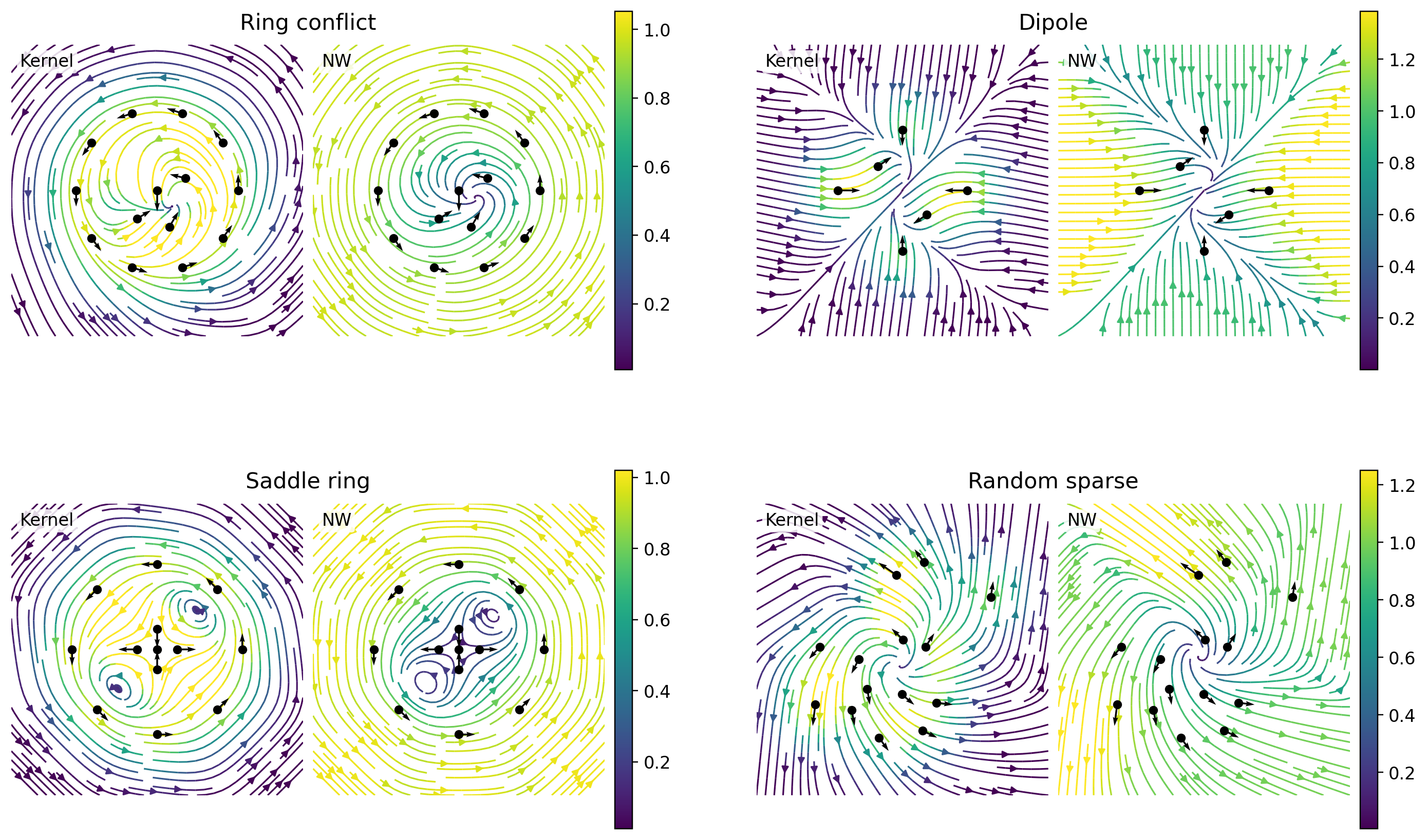}
    \caption{\textbf{Kernel interpolation vs.\ NW smoothing under four anchor configurations.}
    In each panel, the left sub-plot is exact Gaussian-RKHS interpolation (\textsc{Kernel}) and the right sub-plot is Nadaraya-Watson smoothing (\textsc{NW}). Colors represent $\|v(x)\|$.
    All cases use the same anchor positions, anchor vectors, and bandwidth $\sigma$ within a panel.}
    \label{fig:nw_kernel_contact_centered}
\end{figure}

\noindent
Fig.~\ref{fig:nw_kernel_contact_centered} highlights systematic geometric differences between exact kernel interpolation and
NW smoothing. The most robust distinction is that \textsc{Kernel} produces an \emph{unnormalized} Gaussian expansion,
so $\|v(x)\|$ tends to \emph{decay away from anchors}, whereas \textsc{NW} computes a \emph{normalized} convex combination
of anchor vectors, yielding a field that behaves like a \emph{local barycentric average} and therefore need not vanish in
the far-field.

\paragraph{Ring conflict.}
This configuration resembles the qualitative behavior of persistence-driven updates: multiple anchors distributed on a ring
induce a swirl-like motion, while a few interior anchors introduce competing directions.
\textsc{Kernel} tends to concentrate magnitude near the anchor set (visible attenuation toward the boundary of the window),
and the flow lines exhibit sharper steering to satisfy exact anchor constraints.
\textsc{NW} yields a more uniformly energized outer region and a smoother transition between the ring-induced swirl and the
interior conflict, reflecting local averaging rather than exact matching.


\paragraph{Dipole.}
A small number of strong, antagonistic anchors produce a dipole-like geometry.
\textsc{Kernel} emphasizes the dominance of strong anchors by producing localized high-norm lobes near them, with rapid
attenuation elsewhere.
\textsc{NW} spreads influence more evenly: large regions inherit the prevailing direction under normalized weighting,
resulting in wider basins of attraction/repulsion and smoother separatrices.

\paragraph{Saddle ring.}
Here, a ring component combines with a central saddle structure.
\textsc{Kernel} produces several localized vortical/saddle patterns whose intensities decay away from anchors.
\textsc{NW} preserves the qualitative topology of the flow but exhibits a more ``global'' strength profile and smoother
interactions between the saddle and ring components, consistent with a convex combination mechanism.

\paragraph{Random sparse.}
With irregular anchor placement, \textsc{Kernel} can generate pronounced local features around isolated anchors, with
relatively weak far-field magnitude.
\textsc{NW} is notably robust: because each evaluation is a normalized average over nearby anchors, the field remains
well-defined and smooth across the domain, often producing globally coherent motion even when anchors are unevenly
distributed.

\noindent
Overall, these visualizations support two practical messages relevant to persistence-based optimization: (i) NW smoothing
behaves like a geometry-aware averaging operator that propagates sparse topological gradients beyond anchors with minimal
computational overhead, and (ii) normalization changes the far-field behavior compared to RKHS interpolation, which can be
beneficial for spreading topological updates across the ambient point set.

\clearpage
\section{Further Experiments in 2D}
\label{app:more_2d}

We report two additional $2$D experiments that complement the main text and further illustrate the empirical behavior of
our scalable pipeline (\textsc{NW-Convolution}, i.e.\ projection-stratified subsampling with NW smoothing) against (i) vanilla
topological gradient descent (sparse updates) and (ii) kernel-based diffeomorphic interpolation.
For both experiments, we use $N=1000$ points, subsample size $s=100$, and optimize with step size $\eta_X=0.05$. For our method we use a base bandwidth $\sigma_0=0.1$ and, when enabled,
a one-step look-ahead update of $\sigma$ with learning rate $\eta_\sigma=10^{-3}$ (cf.\ \Cref{app:learn_sigma}), while for diffeomorphic interpolation, we use a bandwidth $\sigma = 0.1$.

\subsection{Maxpers on a uniform square}
\label{app:maxpers_2d}

We first consider a point cloud initialized uniformly on $[-1,1]^2$ and optimize a standard topological
\emph{augmentation} objective that encourages the emergence of persistent $H_1$ features while keeping points within the
box. Concretely, given $\Dgm_1(X)=\{(b_\alpha,d_\alpha)\}$, we use
\begin{equation}
\label{eq:maxpers_app}
\mathcal{L}_{\mathrm{maxpers}}(X)
=
-\sum_{\alpha\in\Dgm_1(X)}\Big(\tfrac12(d_\alpha-b_\alpha)\Big)^2
\;+\;
\sum_{i=1}^N \max\!\big(\|x_i\|_\infty-1,\,0\big),
\end{equation}
i.e.\ a negative squared total persistence term in dimension $1$ (encouraging long-lived cycles) with a soft constraint
that discourages leaving $[-1,1]^2$. We optimize for $T = 300$ epochs.

\Cref{fig:app_maxpers_2d} shows the initial and final configurations together with the loss curves.
The vanilla method remains limited by the sparsity of the topological gradient, whereas both smoothing-based methods
propagate the sparse signal to non-anchor points and more reliably shape the cloud toward configurations with stronger
$H_1$ persistence. In this setting, \textsc{NW-Convolution} achieves the lowest objective value while being markedly
faster per epoch than kernel-based interpolation (Table~\ref{tab:app_maxpers_2d}).

\begin{figure}[H]
    \centering
    \includegraphics[width=0.95\linewidth]{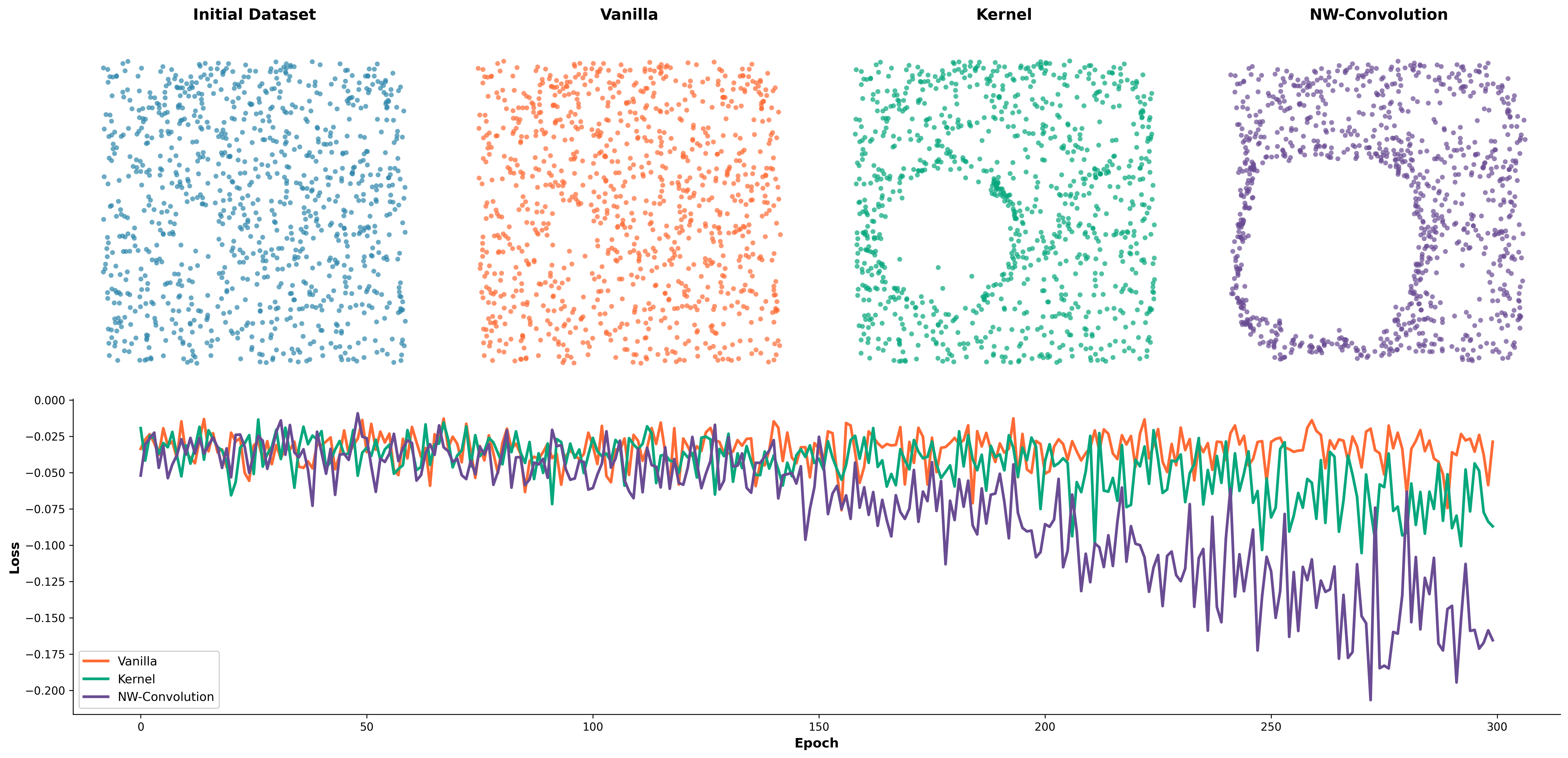}
    \caption{\textbf{2D maxpers from a uniform square.}
    Top: initial cloud sampled uniformly in $[-1,1]^2$ and final configurations after $T=300$ epochs for vanilla updates,
    kernel-based diffeomorphic interpolation (\textsc{Kernel}), and our method (\textsc{NW-Convolution} =
    random slicing + NW smoothing).
    Bottom: loss trajectories for \eqref{eq:maxpers_app} (lower is better).}
    \label{fig:app_maxpers_2d}
\end{figure}

\begin{table}[t]
  \centering
  \footnotesize
  \setlength{\tabcolsep}{10pt}
  \renewcommand{\arraystretch}{1.2}
  \caption{\textbf{2D maxpers: best loss and runtime.}
  Best objective value (lower is better) and average time per epoch.}
  \label{tab:app_maxpers_2d}
  \begin{tabular}{lcc}
    \toprule
    \textbf{Method} & \textbf{Best Loss} $\downarrow$ & \textbf{Avg Time/Epoch (s)} $\downarrow$ \\
    \midrule
    Vanilla       &  -0.0758 & 0.5615 \\
    Kernel        &  -0.1053 & 0.6895 \\
    \rowcolor{gray!15}\textbf{NW-Convolution (ours)} & \textbf{-0.2066} & \textbf{0.3475} \\
    \bottomrule
  \end{tabular}
\end{table}

\subsection{Collapser on a noisy circle}
\label{app:collapser_2d}

We next consider a complementary \emph{simplification} task: the initial cloud is a noisy circle in $\R^2$, and we
optimize the \texttt{collapser} loss, which penalizes both births and deaths and therefore pushes topological features
toward short scales. In our implementation this reads
\begin{equation}
\label{eq:collapser_app}
\mathcal{L}_{\mathrm{coll}}(X)
=
\sum_{\alpha\in\Dgm_1(X)} (d_\alpha + b_\alpha)^2,
\end{equation}
so that driving $\mathcal{L}_{\mathrm{coll}}$ down encourages both early death and early birth, i.e.\ rapid
topological collapse.

\Cref{fig:app_collapser_2d} shows that \textsc{NW-Convolution} leads to a substantially faster and more decisive decrease
of \eqref{eq:collapser_app}, and qualitatively produces a sharper collapse of the noisy ring than both baselines.
Quantitatively, Table~\ref{tab:app_collapser_2d} confirms that our method achieves the lowest loss while remaining close
to the runtime of the vanilla sparse updates, and significantly faster than kernel-based interpolation.

\begin{figure}[H]
    \centering
    \includegraphics[width=0.95\linewidth]{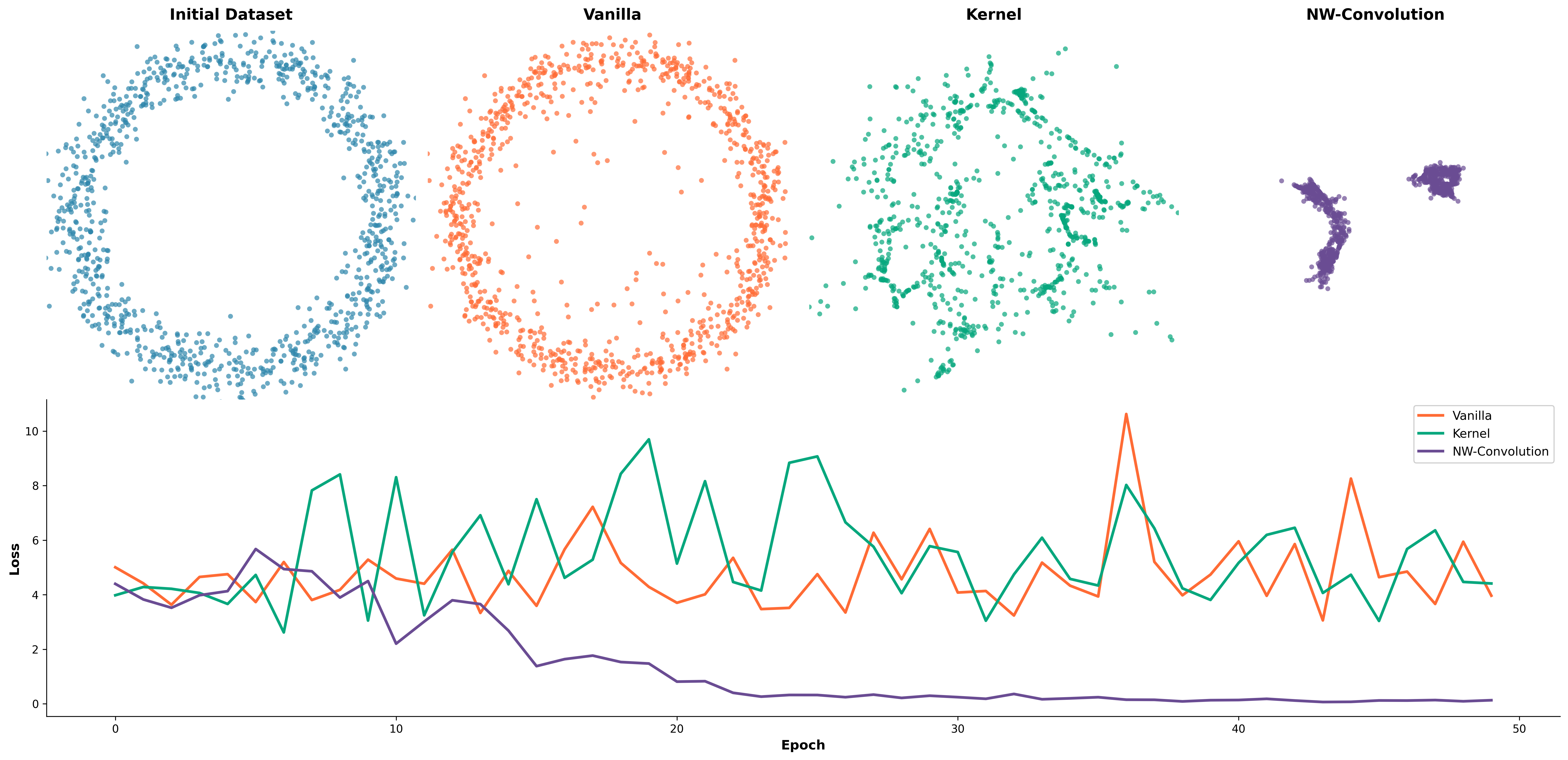}
    \caption{\textbf{2D collapser from a noisy circle.}
    Top: initial noisy circle and final point clouds after $T = 50$ optimization epochs using vanilla updates, kernel-based diffeomorphic
    interpolation (\textsc{Kernel}), and our method (\textsc{NW-Convolution} = random slicing + NW smoothing).
    Bottom: loss curves for the collapser objective \eqref{eq:collapser_app} (lower is better).}
    \label{fig:app_collapser_2d}
\end{figure}

\begin{table}[H]
  \centering
  \footnotesize
  \setlength{\tabcolsep}{10pt}
  \renewcommand{\arraystretch}{1.2}
  \caption{\textbf{2D collapser: best loss and runtime.}
  Best objective value (lower is better) and average time per epoch.}
  \label{tab:app_collapser_2d}
  \begin{tabular}{lrr}
    \toprule
    \textbf{Method} & \textbf{Best Loss} $\downarrow$ & \textbf{Avg Time/Epoch (s)} $\downarrow$ \\
    \midrule
    Vanilla       & 3.057 & \textbf{0.5233} \\
    Kernel        & 2.615 & 0.8800 \\
    \rowcolor{gray!15}\textbf{NW-Convolution (ours)} & \textbf{0.064} & 0.5406 \\
    \bottomrule
  \end{tabular}
\end{table}
    
\end{document}